\newtheorem{theorem}{Theorem}
\newtheorem{proposition}[theorem]{Proposition}
\newenvironment{proof}[1][Proof]{\noindent\textbf{#1: }}{\ \rule{0.5em}{0.5em}}
\title{The geometric lattice of embedded subsets}
\author{Giovanni Rossi\\
\footnotesize{Department of Computer Science and Engineering DISI, University of Bologna}\\
\footnotesize{Mura Anteo Zamboni 7, Italy 40126 - email: giovanni.rossi6@unibo.it}}
\begin{document}

\maketitle

\begin{abstract}
This work proposes an alternative approach to the so-called lattice of embedded subsets, which is included in the product of the subset and partition lattices of a finite set,
and whose elements are pairs consisting of a subset and a partition where the former is a block of the latter. The lattice structure proposed in a recent contribution relies
on ad-hoc definitions of both the join operator and the bottom element, while also including join-irreducible elements distinct from atoms. Conversely, here embedded subsets
obtain through a closure operator defined over the product of the subset and partition lattices, where elements are generic pairs of a subset and a partition. Those such pairs
that coincide with their closure are precisely embedded subsets, and since the Steinitz exchange axiom is also satisfied, what results is a geometric (hence atomic) lattice
given by a simple matroid (or combinatorial geometry) included in the product of the subset and partition lattices (as the partition lattice itself is the polygon matroid
defined on the edges of a complete graph). By focusing on its M\"obius function, this geometric lattice of embedded subsets of a $n$-set is shown to be isomorphic to the
lattice of partitions of a $n+1$-set.\smallskip

\textbf{Keywords:} Closure operator, Geometric lattice, Simple matroid, Modular element, M\"obius algebra, Lattice isomorphism.\smallskip

\textbf{MSC:} 05B35, 05E15, 06A15, 06C10.
\end{abstract}

\section{Introduction}
For a finite set $N=\{1,\ldots ,n\}$, the framework developed in the sequel obtains from both the lattice $(2^N,\cap,\cup)$ of subsets of $N$ and the lattice
$(\mathcal P^N,\wedge,\vee)$ of partitions of $N$. Generic subsets and partitions are denoted, respectively, by $A,B\in 2^N$ and $P,Q\in\mathcal P^N$. Recall that a partition
$P=\{A_1,\ldots ,A_{|P|}\}$ is a family of (non-empty) pair-wise disjoint subsets $A_1,\ldots ,A_{|P|}\in 2^N$, called blocks, whose union is $N$. As usual
\cite{Aigner79,Stern99}, partitions are (partially) ordered by coarsening $\geqslant$, meaning that any $P,Q\in\mathcal P^N$ satisfy $P\geqslant Q$ if every block $B\in Q$
is included in a block $A\in P$, i.e. $A\supseteq B$. Hence the bottom partition is $P_{\bot}=\{\{1\},\ldots ,\{n\}\}$, while the top one is $P^{\top}=\{N\}$.

Denote by $X^N=2^N\times\mathcal P^N$ the product of the subset and partition lattices, whose elements $(A,P),(B,Q)\in X^N$ are pairs consisting of a subset $A,B\in 2^N$ and a
partition $P,Q\in\mathcal P^N$. Lattice $(X^N,\sqcap,\sqcup)$ is partially ordered by $\sqsupseteq$ and naturally obtains through product of the meet, join and order relations
for the subset and partition lattices. That is to say, $(A,P)\sqcap(B,Q)=(A\cap B,P\wedge Q)$ as well as $(A,P)\sqcup(B,Q)=(A\cup B,P\vee Q)$, while
$(A,P)\sqsupseteq(B,Q)\Leftrightarrow A\supseteq B,P\geqslant Q$, where $P\wedge Q$ is the coarsest partition finer than both $P,Q$ and $P\vee Q$ is the finest partition coarser
than both $P,Q$. Thus the bottom element is $(\emptyset,P_{\bot})$ while the top one is $(N,P^{\top})$.

The existing lattice $(\mathfrak C(N)_{\bot},\sqcap,\sqcup_*)$ of embedded subsets has elements given by those pairs $(A,P)\in X^N$ such that $A\in P$, together with ``\textit{an
artificial bottom element $\perp$, which is introduced for mathematical convenience and that may be considered as $(\emptyset,P_{\bot})$}'' \cite[p. 481]{Grabisch2010}. It is not
hard to see that if $A,B\in 2^N$ and $P,Q\in\mathcal P^N$ satisfy $A\in P$ as well as $B\in Q$, then non-emptiness $A\cap B\neq\emptyset$ entails
$(A\cap B,P\wedge Q)\in\mathfrak C(N)_{\bot}$, since $A\cap B$ is a block of $P\wedge Q$. However, in general $A\cup B$ is \textit{not} a block of $P\vee Q$, i.e.
$(A\cup B,P\vee Q)\notin\mathfrak C(N)_{\bot}$. For this reason, the existing lattice of embedded subsets $\mathfrak C(N)_{\bot}$ has the same meet $\sqcap$ and order $\sqsupseteq$
applying to product lattice $X^N$, but the join $\sqcup_*$ is different from $\sqcup$ as detailed below. The resulting structure \cite[Prop. 2, p. 481]{Grabisch2010} has $n$
atoms of the form $(i,P_{\bot}),i\in N$ together with additional $\binom{n}{2}(n-2)$ join-irreducible elements \cite[p. 32]{Aigner79}, and thus the lattice is non-atomic (hence
non-geometric) precisely because many elements cannot be decomposed as a join of atoms. In fact, the only embedded subsets $(A,P)\in\mathfrak C(N)_{\bot}$ admitting such a
decomposition \cite[Prop. 8, pp. 485-486]{Grabisch2010} are those where partition $P$ is the modular element \cite{Aigner79,Stanley1971} of the partition lattice whose
unique non-singleton block, if any, is precisely $A$ \cite[p. 71]{Aigner79} \cite[p. 339]{Stanley2012EnuCom}. This is denoted by $P=P^A_{\bot}$, with
$P^A_{\bot}=\{A,\{i_1\},\ldots ,\{i_{n-|A|}\}\}$ and $\{i_1,\ldots ,i_{n-|A|}\}=N\backslash A=A^c$.

Since $P^i_{\bot}=P_{\bot}$ for all $i\in N$ and $P^N_{\bot}=P^{\top}$, there are $2^n-n$ such modular partitions $P^A_{\bot},\emptyset\neq A\in 2^N$. They play a key role in the
approach proposed here, which relies on a closure operator $cl:X^N\rightarrow X^N$ over poset $(X^N,\sqsupseteq)$, meaning \cite[p. 167]{Aigner79} that for all
$(A,P),(B,Q)\in X^N$ the following hold:
\begin{itemize}
\item[(i)] $cl(A,P)\sqsupseteq(A,P)$,
\item[(ii)] $(A,P)\sqsupseteq (B,Q)\Rightarrow cl(A,P)\sqsupseteq cl(B,Q)$,
\item[(iii)] $cl(cl(A,P))=cl(A,P)$.
\end{itemize}
This closure is defined by $cl(\emptyset,P)=(\emptyset,P)$ for all $P\in\mathcal P^N$, while for $\emptyset\neq A$
\begin{equation}
cl(A,P)=(\bar A,P\vee P^A_{\bot})
\end{equation}
such that $A\subseteq\bar A\in P\vee P^A_{\bot}$. In words, $\bar A$ is the block of $P\vee P^A_{\bot}$ satisfying $\bar A\supseteq A$, and $P^A_{\bot}$ is the modular partition
where the only non-singleton block, if any, is $A$.  The remainder of this work is concerned with poset $(\mathcal E^N,\sqsupseteq)$, with $\mathcal E^N=\{(A,P):cl(A,P)=(A,P)\}$
consisting of those pairs that coincide with their closure. As $\perp$ corresponds to $(\emptyset,P_{\bot})$ (see above), $\mathfrak C(N)_{\bot}\subset\mathcal E^N\subset X^N$,
and in particular $\mathcal E^N\backslash\mathfrak C(N)_{\bot}=\{(\emptyset,P):P\neq P_{\bot}\}$. From the enumerative perspective,
$|\mathfrak C(N)_{\bot}|=1+\sum_{1\leq k\leq n}k\mathcal S_{n,k}$, where $\mathcal S_{n,k}$ is the Stirling number of the second kind, i.e. the number of partitions of a $n$-set
into $k$ blocks, while $|\mathcal E^N|=\sum_{1\leq k\leq n}(k+1)\mathcal S_{n,k}=\mathcal B_{n+1}$, where $\mathcal B_{n+1}$ is the Bell number, i.e. the number of partitions of
a $n+1$-set \cite{Aigner79,ConcreteMathematics,Rota1964}.

The join obtained through closure $cl(\cdot)$ coincides, over $\mathfrak C(N)_{\bot}$, with the existing one $\sqcup_*$ (defined in \cite[Prop. 2, p. 481]{Grabisch2010}), i.e. for
$(A,P),(B,Q)\in\mathfrak C(N)_{\bot}$,
\begin{equation*}
cl((A,P)\sqcup(B,Q))=(A,P)\sqcup_*(B,Q).
\end{equation*}
However, the novel enlarged lattice $\mathcal E^N$ of embedded subsets is substantially different from the existing one $\mathfrak C(N)_{\bot}$, as all its
$n+\binom{n}{2}=\binom{n+1}{2}$ join-irreducible elements (apart from $(\emptyset,P_{\bot})$) are atoms. Hence $\mathcal E^N$ is a geometric lattice with the same atoms as product
lattice $X^N$, and in fact the former reproduces within the latter a situation similar to the inclusion of partition lattice $(\mathcal P^N,\wedge,\vee)$ within Boolean lattice
$(2^{N_2},\cap,\cup)$, where $N_2=\{\{i,j\}:1\leq i<j\leq n\}$ is the $\binom{n}{2}$-set whose elements are all $A\in 2^N$ such that $|A|=2$. Since $N_2$ is the edge set of the
complete graph $K_N=(N,N_2)$ on vertex set $N$, partition lattice $(\mathcal P^N,\wedge,\vee)$ obtains through the closure $cl:2^{N_2}\rightarrow 2^{N_2}$ such that for every edge
set $E\in 2^{N_2}$ graph $G'=(N,cl(E))$ obtains from graph $G=(N,E)$ by adding all edges within each component, i.e. each component of $G'$ is a clique (or maximal complete subgraph)
\cite[p. 54]{Aigner79}. Thus partition lattice $\mathcal P^N$ corresponds to the family $\{E:E=cl(E)\}\subset 2^{N_2}$ of edge sets that coincide with their closure, with the
resulting structure known as the polygon matroid defined on the edges of the complete graph on vertex set $N$ \cite[pp. 259, 274]{Aigner79}. In the sequel, closure
$cl:X^N\rightarrow X^N$ identifying the novel lattice $\mathcal E^N$ of embedded subsets is shown to provide a simple matroid (or combinatorial geometry) \cite[pp. 52, 256]{Aigner79}.

The following Section 2 details how the closure defined by expression (1) above satisfies the Steinitz exchange axiom, thereby showing that the resulting lattice of embedded subsets
is geometric. Section 3 identifies the modular elements of this lattice for a $n$-set, while also highlighting the isomorphism with the lattice of partitions of a $n+1$-set.
Section 4 is devoted to M\"obius inversion, determining the M\"obius function of the geometric lattice of embedded subsets of a $n$-set, which equals the M\"obius function
of the lattice of partitions of a $n+1$-set. Since functions taking real values on embedded subsets were firstly considered in cooperative game theory (as ``games in partition
function form''), Section 5 focuses on the (free) vector space of such functions and on the associated M\"obius algebra, with emphasis on the vector subspaces identified by those
lattice functions whose M\"obius inversion lives only on modular elements. Section 6 concludes the paper with some final remarks.

\section{Partitions, atoms and closure}
The rank $r:\mathcal P^N\rightarrow\mathbb Z_+$ of partitions is $r(P)=n-|P|$, and atoms are immediately above the bottom element, with rank 1 \cite[p. 52]{Aigner79}. Hence they
consist of $n-1$ blocks, out of which $n-2$ are singletons while the remaining one is a pair. As already outlined, this means that $\mathcal P^N$ has essentially the same
$\binom{n}{2}$ atoms as $2^{N_2}$.  For $1\leq i<j\leq n$, denote by $[ij]$ the atom of $\mathcal P^N$ whose non-singleton block is pair $\{i,j\}$. A fundamental difference between
Boolean lattice $(2^{N_2},\cap ,\cup)$ and partition lattice $(\mathcal P^N,\wedge ,\vee)$ is in terms of join-decompositions \cite[Ch. II]{Aigner79}. In fact, every
$\{\{i,j\}_1,\ldots ,\{i,j\}_{|E|}\}=E\in 2^{N_2}$ admits a unique decomposition as a join of atoms, namely $E=\{i,j\}_1\cup\cdots\cup\{i,j\}_{|E|}$, which is irredundant.
Conversely, a generic partition $P\in\mathcal P^N$ admits several such decompositions $P=[ij]_1\vee\cdots\vee[ij]_k$ (with $k\geq r(P)$), most of which redundant, while the unique
maximal one (in terms of inclusion) clearly is the join of all atoms $[ij]$ such that $[ij]\leqslant P$. In particular, $r(P)$ is the minimum number of atoms in a join-decomposition
of $P$, thus any $P=[ij]_1\vee\cdots\vee[ij]_{r(P)}$ is irredundant. 

The classical examples of semimodular lattices \cite{Stern99} come from sets endowed with a closure operator satisfying the Steinitz exchange axiom. Specifically, set $N_2$
endowed with the closure operator $cl:2^{N_2}\rightarrow 2^{N_2}$ introduced in Section 1 yields the semimodular lattice $\mathcal P^N$ of partitions. Subsets $E\in 2^{N_2}$ such
that $cl(E)=E$ are said to be closed, and the family of closed subsets forms a complete lattice under inclusion with meet given by intersection and join given by the closure of
the union. That is, if $E,E'$ are closed, then $E\cap E'$ is closed as well, while $cl(E\cup E')$ is the smallest closed subset containing both $E$ and $E'$. Furthermore, for all
$E\in 2^{N_2}$ and $\{i,j\},\{i,j\}'\in N_2$, the closure also satisfies the following Steinitz exchange axiom:
\begin{equation*}
\text{if }\{i,j\}\notin cl(E)\text{ and }\{i,j\}\in cl(E\cup\{i,j\}')\text{ then }\{i,j\}'\in cl(E\cup\{i,j\})\text .
\end{equation*}
Hence the family of closed subsets is a matroid (included in $2^{N_2}$) which, in particular, is said to be simple (or a combinatorial geometry) in that $cl(\emptyset)=\emptyset$
and $cl(\{i,j\})=\{i,j\}$ for all $\{i,j\}\in N_2$ \cite[pp. 52-53]{Aigner79}.

As already outlined, a simple matroid also obtains as the novel enlarged lattice $\mathcal E^N$ of embedded subsets, whose elements are closed pairs, i.e. pairs $(A,P)\in X^N$
that coincide with their closure $cl(A,P)$ as defined by expression (1) above. To see this, firstly consider the covering relation between closed pairs, denoted by $\sqsupset^*$,
i.e. for $(A,P),(B,Q)\in\mathcal E^N$, it holds $(A,P)\sqsupset^*(B,Q)$ whenever $\{(B',Q'):(A,P)\sqsupseteq (B',Q')\sqsupseteq (B,Q),(B',Q')\in\mathcal E^N\}=\{(A,P),(B,Q)\}$
(see \cite[p. 3]{Aigner79}). The general situation is when either $A\neq\emptyset\neq B$ or else $A=\emptyset=B$, in which case $(A,P)\sqsupset^*(B,Q)$ attains whenever
$P\gtrdot Q$, where $\gtrdot$ denotes the covering relation between partitions, i.e. $P$ obtains by merging two blocks of $Q$ into a unique block. In addition, it can also be
$A\neq\emptyset=B$, and in this case $(A,P)\sqsupset^*(B,Q)$ attains for $P=Q$. This means that $\mathcal E^N$ has $n+\binom{n}{2}=\binom{n+1}{2}$ atoms, in that the
bottom pair $(\emptyset,P_{\bot})$ is covered by $n$ embedded subsets $(i,P_{\bot})$ for $i\in N$, and by further $\binom{n}{2}$ embedded subsets $(\emptyset,[ij])$ for
$\{i,j\}\in N_2$.

Let $\mathcal A_1=\{(i,P_{\bot}):i\in N\}$ and $\mathcal A_2=\{(\emptyset,[ij]):\{i,j\}\in N_2\}$. Hence the set of atoms of $\mathcal E^N$ is
$\mathcal A=\mathcal A_1\cup\mathcal A_2$. The closure operator defined by expression (1) above can now be regarded as a mapping
$cl:2^{\mathcal A}\rightarrow 2^{\mathcal A}$, with closed subsets of $\mathcal A$ being the elements of lattice $\mathcal E^N$. In this view, consider a non-closed subset
$S\in\mathcal A$, with $S=S_1\cup S_2,S_1\in2^{\mathcal A_1},S_2\in2^{\mathcal A_2}$. Firstly focus on the simple case where $S_1=\emptyset$, thus
$S=S_2=\{(\emptyset,[ij]_1),\ldots,(\emptyset,[ij]_{|S_2|})\}$. Here the closure $cl(S)$ of $S$ obtains by adding all atoms $(\emptyset,[ij])\in\mathcal A_2$ such that $[ij]$ is
finer than $[ij]_1\vee\cdots\vee[ij]_{|S_2|}$, i.e. $cl(S)=S\cup\{(\emptyset,[ij]):[ij]\leqslant[ij]_1\vee\cdots\vee[ij]_{|S_2|}\}$. Hence for $S_1=\emptyset$ the closure
$cl:2^{\mathcal A}\rightarrow 2^{\mathcal A}$ applying to embedded subsets reproduces the closure $cl:2^{N_2}\rightarrow 2^{N_2}$ applying to partitions
described above. For the general case $\{(i_1,P_{\bot}),\ldots ,(i_{|S_1|},P_{\bot})\}=S_1\neq\emptyset\neq S_2=\{(\emptyset,[ij]_1),\ldots,(\emptyset,[ij]_{|S_2|})\}$, let
$\{i_1,\ldots ,i_{|S_1|}\}=A$ and $[ij]_1\vee\cdots\vee[ij]_{|S_2|}=P$, with $(A,P)\in X^N$. In view of the closure applying to partitions, like in the previous case, if
$[ij]\leqslant P$, then $(\emptyset,[ij])\in cl(S)$. At this point, the closure applying to embedded subsets defined in Section 1 works as follows. If $A\in P$ (entailing
$P=P\vee P^A_{\bot}$ or, equivalently, $P\geqslant P^A_{\bot}$, of course), then the only atoms to be added are those just described, i.e.
$cl(S)=S\cup\{(\emptyset,[ij]):[ij]\leqslant P\}$, and $(A,P)\in\mathcal E^N$. Otherwise, for $k\leq|P|$, let $B_1,\cdots,B_k\in P$ be \textit{all} the blocks of $P$ whose
intersection with $A$ is non-empty. That is, $\{B:B\in P,B\cap A\neq\emptyset\}=\{B_1,\ldots ,B_k\}$. Then, $\bar A=B_1\cup\cdots\cup B_k$ is a block of partition
$P\vee P^A_{\bot}$, with proper inclusion $\bar A\supset A$, and therefore $(\bar A,P\vee P^A_{\bot})\in\mathcal E^N$ is the minimal (in terms of $\sqsupseteq$) embedded subset
$(B,Q)$ satisfying $(B,Q)\sqsupseteq(A,P)$. On the other hand, the smallest closed subset of $\mathcal A$ containing $S$ is
$cl(S)=\{(i,P_{\bot}):i\in\bar A\}\cup\{(\emptyset,[ij]):[ij]\leqslant P\vee P^A_{\bot}\}$.

\begin{proposition}
Closure $cl:2^{\mathcal A}\rightarrow 2^{\mathcal A}$ satisfies the Steinitz exchange axiom.
\end{proposition}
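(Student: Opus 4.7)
My plan is to read the exchange axiom for $cl:2^{\mathcal A}\rightarrow 2^{\mathcal A}$ through the pair $(A,P)\in X^N$ associated to a subset $S=S_1\cup S_2\subseteq\mathcal A$, with $A=\{i:(i,P_\bot)\in S_1\}$ and $P=\bigvee_{(\emptyset,[ij])\in S_2}[ij]$, and then to split on the types of the two moving atoms $a,a'\in\mathcal A_1\cup\mathcal A_2$. Writing $[x]_P$ for the block of $P$ containing $x$ and $\bar A=\bigcup_{x\in A}[x]_P$, the explicit closure formula recalled just before the statement gives $(\emptyset,[ij])\in cl(S)$ iff $[ij]\leqslant P\vee P^A_\bot$, and $(k,P_\bot)\in cl(S)$ iff $k\in\bar A$. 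I may also assume throughout that $a,a'\notin S$, since otherwise the exchange is immediate.

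When $a,a'\in\mathcal A_2$, setting $Q=P\vee P^A_\bot$ makes the three clauses of the exchange axiom coincide with the Steinitz axiom for the partition-lattice closure on $N_2$ recalled in Section~2, so this case is free. When $a=(k,P_\bot)$ and $a'=(k',P_\bot)$ are both in $\mathcal A_1$, the hypothesis $k\notin\bar A$ together with $k\in\bar A\cup[k']_P$ forces $k\in[k']_P$, hence $[k]_P=[k']_P$ and the symmetric conclusion $k'\in\bar A\cup[k]_P$ follows by inspection.

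The two mixed cases carry the content, and by symmetry it suffices to spell out $a=(\emptyset,[ij]),\,a'=(k,P_\bot)$ with $A\neq\emptyset$ (if $A=\emptyset$ then $P^{\{k\}}_\bot=P_\bot$, so the premise would force $[ij]\leqslant P$, contradicting $a\notin cl(S)$, and the axiom holds vacuously). Unpacking $P\vee P^{A\cup\{k\}}_\bot$ as the partition obtained by merging all $P$-blocks meeting $A\cup\{k\}$, the hypotheses $[ij]\not\leqslant P\vee P^A_\bot$ and $[ij]\leqslant P\vee P^{A\cup\{k\}}_\bot$ force, up to swapping $i$ and $j$, that $i\in[k]_P\setminus\bar A$ while $j\in\bar A$ with $[j]_P\neq[k]_P$. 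Adjoining $[ij]$ to $P$ then merges $[k]_P$ with $[j]_P$, so the block of $P\vee[ij]$ through $k$ already meets $A$ via $j$, which is precisely $(k,P_\bot)\in cl(S\cup\{a\})$. The opposite mixed case is strictly symmetric.

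The main obstacle I expect is this mixed-case bookkeeping: one must identify at the level of set-theoretic blocks how a single extra edge $[ij]$ adjoined to $P$ and a single extra vertex $k$ adjoined to $A$ produce the same new identification in the merged block structure, and the subcases on whether $i,j,k$ lie inside or outside $\bar A$ (and inside or outside a common $P$-block) have to be exhausted to avoid missing a scenario. Once this is fixed, nothing deeper than the known exchange axiom for the partition lattice is invoked.
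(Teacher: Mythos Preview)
Your plan is correct and follows essentially the same idea as the paper: a case split on the types of the two atoms, reduced to tracking which $P$-block gets absorbed into $\bar A$. The paper's organization differs only cosmetically---it takes $S$ already closed (so $A\in P$ and $\bar A=A$) and splits on the type of the \emph{added} atom alone, treating both possible target types within each branch---whereas you keep $S$ general, split into all four type-pairs, and dispose of the pure $\mathcal A_2$--$\mathcal A_2$ case by invoking the known partition-lattice exchange axiom rather than redoing it by hand. One caution: the two mixed cases are not related by a literal symmetry of the exchange axiom (swapping $a$ and $a'$ turns the implication into its converse, not the same statement), so ``strictly symmetric'' should be read as ``by a parallel argument''; that parallel argument does go through, via the same block identification $[k]_P=[i]_P$ disjoint from $\bar A$ with $j\in\bar A$.
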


\begin{proof}
For $(A,P)\in\mathcal E^N$, let atoms $(i,P_{\bot}),(i',P_{\bot})\in\mathcal A_1,(\emptyset,[ij]),(\emptyset,[ij]')\in\mathcal A_2$ satisfy 
$(A,P)\not\sqsupseteq(i,P_{\bot}),(i',P_{\bot}),(\emptyset,[ij]),(\emptyset,[ij]')$, i.e. $i,i'\not\in A$ and
$[ij],[ij]'\not\leqslant P$. In terms of subsets of atoms, let $S_{AP}=\{(\tilde i,P_{\bot}):\tilde i\in A\}\cup\{(\emptyset,\tilde{[ij]}):\tilde{[ij]}\leqslant P\}$, hence
$S_{AP}=cl(S_{AP})$ and $(i,P_{\bot}),(i',P_{\bot}),(\emptyset,[ij]),(\emptyset,[ij]')\notin S_{AP}$. Two cases
$(i',P_{\bot}),(\emptyset,[ij]')\in cl(S_{AP}\cup(i,P_{\bot}))$ and $(i',P_{\bot}),(\emptyset,[ij]')\in cl(S_{AP}\cup(\emptyset,[ij]))$ must be distinguished. In the first one,
there is a block $A'\in P,A'\neq A$ such that $i\in A'$, and both $i'\in A'$ and $|A\cap\{i,j\}'|=1=|A'\cap\{i,j\}'|$ hold. Therefore,
$cl(S_{AP}\cup(i',P_{\bot}))=cl(S_{AP}\cup(i,P_{\bot}))=cl(S_{AP}\cup(\emptyset,[ij]'))$, entailing that the exchange axiom is satisfied. Analogously, in the second case there is
a block $A'\in P,A'\neq A$ such that $|A\cap\{i,j\}|=1=|A'\cap\{i,j\}|$, and both $i'\in A'$ and $|A\cap\{i,j\}'|=1=|A'\cap\{i,j\}'|$ hold. Hence again the exchange axiom is
satisfied as $cl(S_{AP}\cup(i',P_{\bot}))=cl(S_{AP}\cup(\emptyset,[ij]))=cl(S_{AP}\cup(\emptyset,[ij]'))$.
\end{proof}
\smallskip

Hence, $\mathcal E^N$ is a matroid and, in particular, a simple one, in that all $\binom{n+1}{2}+1$ subsets of atoms given by the empty set and the $\binom{n+1}{2}$ singletons
are closed.

\begin{proposition}
Poset $(\mathcal E^N,\sqsupseteq)$ is a geometric lattice where the meet and join of any $(A,P),(B,Q)\in\mathcal E^N$ are $(A,P)\sqcap(B,Q)$ and $cl((A,P)\sqcup(B,Q))$,
respectively.
\end{proposition}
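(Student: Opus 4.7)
My plan is to invoke the standard theorem that the closed subsets of a simple matroid, ordered by inclusion, form a geometric lattice with meet given by intersection and join given by closure of union (see e.g.\ \cite[Ch.\ II and VII]{Aigner79}). Proposition 1 already supplies the Steinitz exchange axiom, and $cl$ is the identity on the empty set and on singletons (noted right after the proof of Proposition 1), so this theorem applies to the matroid on $\mathcal A$ determined by $cl:2^{\mathcal A}\to 2^{\mathcal A}$. What remains is to identify $\mathcal E^N$ with the family of closed subsets of $\mathcal A$ via the bijection
$\phi(A,P)=\{(i,P_{\bot}):i\in A\}\cup\{(\emptyset,[ij]):[ij]\leqslant P\}$
and to check that $\phi$ translates inclusion, intersection and closure-of-union into $\sqsupseteq$, $\sqcap$ and $cl(\sqcup)$, respectively.

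For the order and the meet, I would first note that $(A,P)\sqsupseteq(B,Q)$ unpacks componentwise, and that $P\geqslant Q$ in $\mathcal P^N$ holds precisely when every atom $[ij]\leqslant Q$ also satisfies $[ij]\leqslant P$ (because $\mathcal P^N$ is atomistic, as recalled in Section 2); hence $(A,P)\sqsupseteq(B,Q)$ is equivalent to $\phi(A,P)\supseteq\phi(B,Q)$. Intersecting the two closed sets componentwise then gives $\{(i,P_{\bot}):i\in A\cap B\}\cup\{(\emptyset,[ij]):[ij]\leqslant P\text{ and }[ij]\leqslant Q\}$, and since $P\wedge Q$ is the greatest lower bound in $\mathcal P^N$, the joint condition collapses to $[ij]\leqslant P\wedge Q$. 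Thus the intersection equals $\phi(A\cap B,P\wedge Q)=\phi((A,P)\sqcap(B,Q))$, which in particular shows that $(A,P)\sqcap(B,Q)\in\mathcal E^N$.

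For the join, I would run the atom-level closure recipe described just before Proposition 1 on $S:=\phi(A,P)\cup\phi(B,Q)$. Its $\mathcal A_1$-part corresponds to the subset $A\cup B$, and the partition associated to its $\mathcal A_2$-part is $\bigvee\{[ij]:[ij]\leqslant P\text{ or }[ij]\leqslant Q\}$, which coincides with $P\vee Q$ by atomisticity of $\mathcal P^N$. Feeding the resulting pair $(A\cup B,P\vee Q)=(A,P)\sqcup(B,Q)$ into the closure formula (1) from Section 1 and comparing with the atom-level recipe produces exactly $\phi(cl((A,P)\sqcup(B,Q)))$. Combined with Proposition 1 this yields the stated geometric lattice structure. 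The step that requires the most care — and which I view as the main verification rather than a genuine obstacle — is the reconciliation of the subset-block description $\bar A$ of $cl$ on $X^N$ with the atom-theoretic description of $cl$ on $2^{\mathcal A}$; but this is precisely the content of the paragraph preceding Proposition 1, so little new work is needed.
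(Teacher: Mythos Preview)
Your proof is correct and takes a genuinely different route from the paper's. The paper verifies directly the two covering-characterisation conditions for a geometric lattice (the cited \cite[Prop.~2.28]{Aigner79}): namely, that $(A,P)\sqsupset^*(B,Q)$ holds if and only if $(A,P)=cl((B,Q)\sqcup a)$ for some atom $a\not\sqsubseteq(B,Q)$. This is done by an explicit case analysis on whether $A=B$, $A\supset B$, or $B=\emptyset$, and on how an atom of type $\mathcal A_1$ or $\mathcal A_2$ interacts with the blocks of $Q$. Your argument instead appeals once and for all to the standard theorem that the flats of a simple matroid form a geometric lattice (with meet $=$ intersection and join $=$ closure of union), using Proposition~1 for the exchange axiom and the observation following it for simplicity; the remaining work is the purely order-theoretic check that $\phi$ is an isomorphism of posets carrying intersection to $\sqcap$ and closure-of-union to $cl(\cdot\sqcup\cdot)$. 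What you gain is economy and conceptual clarity: the case analysis the paper performs is already absorbed into the general matroid theorem, and the verifications you list are straightforward consequences of the atomisticity of $\mathcal P^N$ together with the explicit description of $cl$ given just before Proposition~1. What the paper's direct approach buys, by contrast, is a concrete description of the covering relation in $\mathcal E^N$ (which blocks merge, when $A$ grows), information that is useful later when the rank function and modular elements are identified.
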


\begin{proof}
Since $|\mathcal E^N|<\infty$, it only has to be verified \cite[Prop. 2.28, p. 52]{Aigner79} that for all $(A,P),(B,Q)\in\mathcal E^N$ the following hold:\\
- if $(A,P)\sqsupset^*(B,Q)$, then there is an atom $(i',P_{\bot})\in\mathcal A_1,(B,Q)\not\sqsupseteq(i',P_{\bot})$ and/or
$(\emptyset,[ij])\in\mathcal A_2,(B,Q)\not\sqsupseteq(\emptyset,[ij])$ such that $(A,B)=cl((B,Q)\sqcup(i',P_{\bot}))$ and/or $(A,B)=cl((B,Q)\sqcup(\emptyset,[ij]))$,\\
- if $(A,B)=cl((B,Q)\sqcup(i',P_{\bot}))$ and/or $(A,B)=cl((B,Q)\sqcup(\emptyset,[ij]))$ for some atom $(i',P_{\bot})\in\mathcal A_1,(B,Q)\not\sqsupseteq(i',P_{\bot})$ and/or
$(\emptyset,[ij])\in\mathcal A_2,(B,Q)\not\sqsupseteq(\emptyset,[ij])$, then $(A,P)\sqsupset^*(B,Q)$.

Both conditions are trivially satisfied if $(B,Q)$ is the bottom element, hence let $(B,Q)\neq(\emptyset,P_{\bot})$. Then $(A,P)\sqsupset^*(B,Q)$ attains for
$P\gtrdot Q,A\supseteq B$. If $A=B$, then $(A,P)=cl((B,Q)\sqcup(\emptyset,[ij]))$ for all atoms $(\emptyset,[ij])\in\mathcal A_2,(B,Q)\not\sqsupseteq(\emptyset,[ij])$ such that
$\{i,j\}\cap B=\emptyset$ (where of course $Q\not\geqslant[ij]$), while proper inclusion $A\supset B$ entails that for some block $B'\in Q,B'\neq B$ the covering
embedded subset $(A,P)$ obtains as $(A,P)=cl((B,Q)\sqcup(\emptyset,[ij]))=cl((B,Q)\sqcup(i',P_{\bot}))$ for all atoms
$(i',P_{\bot})\in\mathcal A_1,(B,Q)\not\sqsupseteq(i',P_{\bot}),(\emptyset,[ij])\in\mathcal A_2,(B,Q)\not\sqsupseteq(\emptyset,[ij])$ such that $i'\in B'$ and
$|B\cap\{i,j\}|=1=|B'\cap\{i,j\}|$. Accordingly, $A=B\cup B'$.

Turning to the second condition, for all
$(i',P_{\bot})\in\mathcal A_1,(B,Q)\not\sqsupseteq(i',P_{\bot})$, $(\emptyset,[ij])\in\mathcal A_2,(B,Q)\not\sqsupseteq(\emptyset,[ij])$ it is easily checked that both
$cl((B,Q)\sqcup(i',P_{\bot}))$ and $cl((B,Q)\sqcup(\emptyset,[ij]))$ cover $(B,Q)$. In fact, the general case is where there is a block $B'\in Q,B'\neq B$ such that $i'\in B'$ and
$|B'\cap\{i,j\}|=1=|B\cap\{i,j\}|$, entailing $cl((B,Q)\sqcup(i',P_{\bot}))=cl((B,Q)\sqcup(\emptyset,[ij]))=(B\cup B',Q\vee P^{B\cup B'}_{\bot})$, with
$Q\vee P^{B\cup B'}_{\bot}$ obtained by merging blocks $B$ and $B'$ in $Q$, hence $Q\vee P^{B\cup B'}_{\bot}\gtrdot Q$. Yet, like for the first condition,
if $B\cap\{i,j\}=\emptyset$, then there are blocks $B',B''\in Q$, with $B'\neq B\neq B''$, such that $|B'\cap\{i,j\}|=1=|B''\cap\{i,j\}|$. Accordingly,
$cl((B,Q)\sqcup(\emptyset,[ij]))=(B,Q')$, where $Q'$ obtains by merging blocks $B',B''$ in $Q$, i.e. $Q'\gtrdot Q$.
\end{proof}
\smallskip

Having recognized the lattice structure characterizing $\mathcal E^N$, it may now be readily seen that the Jordan-Dedekind chain condition \cite[p. 30]{Aigner79} is satisfied, and
the rank function $r:\mathcal E^N\rightarrow\mathbb Z_+$ is $r(A,P)=r(P)+\min\{|A|,1\}$. In fact, for any $(A,P),(B,Q)\in\mathcal E^N$ such that $(A,P)\sqsupset(B,Q)$ (that is,
$(A,P)\sqsupseteq(B,Q)$, $(A,P)\neq(B,Q)$), all maximal chains $\{(B',Q')_0,(B',Q')_1,\ldots,(B',Q')_k\}$ from $(B,Q)=(B',Q')_0$ to $(A,P)=(B',Q')_k$
(i.e. $(B',Q')_{k'+1}\sqsupset^*(B',Q')_{k'}$ for $0\leq k'<k$) have same length $k$, in that $(B',Q')_{k'+1}=cl((B',Q')_{k'}\sqcup(i,P_{\bot}))$ and/or
$(B',Q')_{k'+1}=cl((B',Q')_{k'}\sqcup(\emptyset,[ij]))$ with atoms $(i,P_{\bot}),(\emptyset,[ij])$ such that
$(B',Q')_{k'+1}\sqsupseteq(i,P_{\bot}),(\emptyset,[ij])$ while $(B',Q')_{k'}\not\sqsupseteq(i,P_{\bot}),(\emptyset,[ij])$.

\section{Modular elements and isomorphism}
By definition (of modular elements of geometric lattices \cite[p. 58]{Aigner79}), modular embedded subsets are those $(A,P)\in\mathcal E^N$ such that for all
$(B,Q)\in\mathcal E^N$ it holds $r(A,P)+r(B,Q)=r(cl((A,P)\sqcup(B,Q)))+r((A,P)\sqcap(B,Q))$. For $Q\in\mathcal P^N$ and $\emptyset\neq A\in 2^N$, let
$Q^A=\{B\cap A:B\in Q,B\cap A\neq\emptyset\}$ denote the partition of $A$ induced by $Q$. 

\begin{proposition}
An embedded subset $(A,P)\in\mathcal E^N$ is modular if and only if $P$ is a modular partition and either $A=\emptyset$, or else $A$ is the non-singleton block of $P$ when
$P\neq P_{\bot}$.
\end{proposition}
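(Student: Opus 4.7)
The plan is to use the standard rank criterion: in the geometric lattice $\mathcal E^N$, an element $(A,P)$ is modular iff $r(A,P)+r(B,Q)=r(cl((A,P)\sqcup(B,Q)))+r((A,P)\sqcap(B,Q))$ for every $(B,Q)\in\mathcal E^N$. Combined with the classical characterization of modular elements of $\mathcal P^N$ as the partitions $P^C_{\bot}$ with at most one non-singleton block (cited in the introduction), and the rank formula $r(A,P)=r(P)+\min\{|A|,1\}$ recorded at the end of Section~2, this reduces the proposition to a translation between rank identities in $\mathcal E^N$ and in $\mathcal P^N$.

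For \emph{sufficiency} ($\Leftarrow$), I would fix $(A,P)$ of the stated form and verify the rank identity for an arbitrary $(B,Q)\in\mathcal E^N$. I split into the two cases $A=\emptyset$ with $P=P^C_{\bot}$ for some $C$, and $A\neq\emptyset$ with $P=P^A_{\bot}$ (which absorbs $A$ singleton, $P=P_{\bot}$). In either case the meet $(A\cap B,P\wedge Q)$ is immediate, while the join $cl((A\cup B,P\vee Q))$ either leaves $A\cup B$ unchanged (when $A\cup B$ is already a block of $P\vee Q$) or fuses all blocks of $P\vee Q$ meeting $A\cup B$ into the single block $\overline{A\cup B}$, coarsening the partition by the same mergers. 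Via $r(A,P)=r(P)+\min\{|A|,1\}$, the rank identity in $\mathcal E^N$ then reduces to the modular identity $r(P)+r(Q)=r(P\vee Q)+r(P\wedge Q)$ for $P$ in $\mathcal P^N$, modulo a careful accounting of the $\min\{\cdot,1\}$ contributions.

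For \emph{necessity} ($\Rightarrow$), assume $(A,P)$ is modular. I would first probe with $(\emptyset,Q)$ for arbitrary $Q\in\mathcal P^N$: after cancellation of the subset components, the $\mathcal E^N$-rank identity collapses to $r(P)+r(Q)=r(P\vee Q)+r(P\wedge Q)$, so $P$ must be modular in $\mathcal P^N$, i.e.\ $P=P^C_{\bot}$ for some $C\subseteq N$. If now $A\neq\emptyset$ and $P\neq P_{\bot}$ (so $|C|\geq 2$), then since $A\in P$ one has either $A=C$ or $A=\{i\}$ for some $i\in N\backslash C$; to rule out the latter I would test modularity against a tailored atom such as $(\emptyset,[jk])$ with $j\in C$ and $k=i$, showing that the closure in the join shifts rank by a different amount than required and producing the contradiction.

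The main obstacle will be the bookkeeping in the join, where $cl((A,P)\sqcup(B,Q))$ may simultaneously enlarge $A\cup B$ to $\overline{A\cup B}$ and coarsen $P\vee Q$ by merging several blocks at once; aligning these two effects with the modular identity for $P$ in $\mathcal P^N$ and with the $\min\{|\cdot|,1\}$ terms in $\mathcal E^N$ is the core computation. A cleaner alternative would be to invoke the isomorphism $\mathcal E^N\cong\mathcal P^{N\cup\{*\}}$ and transport the known characterization of modular partitions across, but since that isomorphism is the subject of the very next results in Section~3, a direct rank argument as outlined above is the prudent plan here.
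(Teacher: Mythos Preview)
Your plan for sufficiency is essentially the paper's: fix $(A,P)$ of the prescribed form and verify the rank identity against an arbitrary $(B,Q)$ by direct computation, using $r(A,P)=r(P)+\min\{|A|,1\}$ and the known modularity of $P^C_{\bot}$ in $\mathcal P^N$. The paper does exactly these case computations (splitting on $A\cap B=\emptyset$ or not).

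For necessity, your reduction via $(\emptyset,Q)$ is correct and is more explicit than the paper, which merely invokes submodularity of $r$ and asserts strict inequality for non-modular pairs without exhibiting any witness. However, your last step has a real gap: testing $(\{i\},P^C_{\bot})$ (with $i\notin C$, $|C|\geq 2$) against an \emph{atom} can never produce a violation. In any geometric lattice, if $y$ is an atom not below $x$, then $x\sqcap y$ is the bottom element and $cl(x\sqcup y)$ covers $x$, so $r(x)+r(y)=r(x)+1=r(cl(x\sqcup y))+r(x\sqcap y)$ automatically. Concretely, with your choice $(B,Q)=(\emptyset,[ji])$, $j\in C$, the meet is $(\emptyset,P_{\bot})$ and the join is $(C\cup\{i\},P^{C\cup\{i\}}_{\bot})$, so both sides equal $|C|+1$ and no contradiction arises.

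You need a witness of rank at least $2$. For instance, choose distinct $j,k\in C$ and test against $(\{k\},[ij])$: the meet is still $(\emptyset,P_{\bot})$ (rank $0$) and the join is still $(C\cup\{i\},P^{C\cup\{i\}}_{\bot})$ (rank $|C|+1$), but now $r(\{i\},P^C_{\bot})+r(\{k\},[ij])=|C|+2>|C|+1$, yielding the required strict inequality. Under the isomorphism $f:\mathcal E^N\to\mathcal P^{N_+}$ this is exactly the classical witness showing that the partition with two non-singleton blocks $\{i,n+1\}$ and $C$ is non-modular, which confirms that your alternative route via the isomorphism would also succeed.
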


\begin{proof}
For the \textquoteleft if\textquoteright\text{ }part, the bottom element $(\emptyset,P_{\bot})$ is evidently modular. Accordingly, firstly consider embedded subset
$(\emptyset,P^A_{\bot})$, with $|A|>1$. The rank is $r(\emptyset,P^A_{\bot})=n-(n-|A|+1)=|A|-1$. On the other hand, any $(B,Q)$ has rank $n-|Q|+1$ (since the case $B=\emptyset$
is trivial precisely because $P^A_{\bot}$ is a modular partition). Thus $r(\emptyset,P^A_{\bot})+r(B,Q)=n-|Q|+|A|$. Now consider that the meet has rank
$r((\emptyset,P^A_{\bot})\sqcap(B,Q))=n-(|Q^A|+n-|A|)=|A|-|Q^A|$, while the (closure of the) join yields
$r(cl((\emptyset,P^A_{\bot})\sqcup(B,Q)))=n-(|Q|-|Q^A|+1)+1=n-|Q|+|Q^A|$. Therefore, $|A|-|Q^A|+n-|Q|+|Q^A|=n-|Q|+|A|$ as desired.

In the general case $(A,P^A_{\bot})$, with $r(A,P^A_{\bot})=r(\emptyset,P^A_{\bot})+1=|A|$, the sum is $r(A,P^A_{\bot})+r(B,Q)=n-|Q|+1+|A|$. The rank
$r((A,P^A_{\bot})\sqcap(B,Q))$ of the meet equals $|A|-|Q^A|$ if $A\cap B=\emptyset$ and $|A|-|Q^A|+1$ if $A\cap B\neq\emptyset$. Correspondingly, the rank
$r(cl((A,P^A_{\bot})\sqcup(B,Q)))$ of the join equals\\ $n-(|Q|-|Q^A|+1-1)+1=n-|Q|+|Q^A|+1$ if $A\cap B=\emptyset$ and\\ $n-(|Q|-|Q^A|+1)+1=n-|Q|+|Q^A|$ if
$A\cap B\neq\emptyset$.\\
Thus the sought sum is $|A|-|Q^A|+n-|Q|+|Q^A|+1=n-|Q|+1+|A|$ in the former case and $|A|-|Q^A|+1+n-|Q|+|Q^A|=n-|Q|+1+|A|$ in the latter, i.e. the same.

Turning to the \textquoteleft only if\textquoteright\text{ }part, it is not difficult to check that, just like the rank of partitions, the rank of embedded subsets is submodular,
that is to say $r(A,P)+r(B,Q)\geq r(cl((A,P)\sqcup(B,Q)))+r((A,P)\sqcap(B,Q))$, with strict inequality whenever $(A,P),(B,Q)$ is not a modular pair \cite[p. 58]{Aigner79} of
lattice elements.
\end{proof}
\smallskip

Let $N_+=\{1,\ldots ,n,n+1\}$, i.e. $N_+=N\cup\{n+1\}$ and denote by $\mathcal P^{N_+}$ the lattice of partitions of $N_+$, with generic element denoted by
$P^+\in\mathcal P^{N_+}$. From Section 1, $|\mathcal E^N|=\mathcal B_{n+1}=|\mathcal P^{N_+}|$. In fact, there is a lattice isomorphism
$f:\mathcal E^N\rightarrow\mathcal P^{N_+}$, meaning that for all $(A,P),(B,Q)\in\mathcal E^N$ the following holds:
\begin{eqnarray*}
f((A,P)\sqcap(B,Q))&=&f(A,P)\wedge f(B,Q)\text ,\\
f(cl((A,P)\sqcup(B,Q)))&=&f(A,P)\vee f(B,Q)\text .
\end{eqnarray*}
To see this, consider that any partition $\{A_1,\ldots ,A_{|P|}\}=P\in\mathcal P^N$ has associated $|P|+1$ partitions $P^+_1,\ldots ,P^+_{|P|},P^+_{|P|+1}\in\mathcal P^{N_+}$,
all inducing the same partition $P$ of $N$. The first $|P|$ ones $P^+_1,\ldots ,P^+_{|P|}$ obtain each by placing additional element $n+1$ into block $A_k\in P,1\leq k\leq|P|$,
while the last one $P^+_{|P|+1}$ obtains by placing $n+1$ into a singleton block $\{n+1\}\in P^+_{|P|+1}$. Of course, the former are pair-wise incomparable, i.e.
$P^+_k\not\geqslant P^+_{k'}\not\geqslant P^+_k$ for $1\leq k<k'\leq|P|$, while the latter is strictly finer than all the others: $P^+_k>P^+_{|P|+1},1\leq k\leq|P|$. The same
applies to embedded subsets $(A_1,P),\ldots ,(A_{|P|},P),(\emptyset,P)\in\mathcal E^N$, in that $(A_k,P)\not\sqsupseteq(A_{k'},P)\not\sqsupseteq(A_k,P)$ for $1\leq k<k'\leq|P|$
as well as $(A_k,P)\sqsupset(\emptyset,P)$, $1\leq k\leq|P|$. In other terms, letting $P=\{A,B_1,\ldots,B_{|P|-1}\}\in\mathcal P^N$ for all $(A,P)\in\mathcal E^N$ such that
$A\neq\emptyset$, isomorphism $f:\mathcal E^N\rightarrow\mathcal P^{N_+}$ is defined by
\begin{equation*}
f(A,P)=P^+\text{ such that }P^+=\{A\cup\{n+1\},B_1,\ldots,B_{|P|-1}\}\text .
\end{equation*}
On the other hand, for every $P=\{B_1,\ldots ,B_{|P|}\}\in\mathcal P^N$ and $(\emptyset,P)\in\mathcal E^N$,
\begin{equation*}
f(\emptyset,P)=P^+\text{ such that }P^+=\{\{n+1\},B_1,\ldots,B_{|P|}\}\text .
\end{equation*}
Evidently, in this way the $2^{n+1}-(n+1)$ modular elements of $\mathcal P^{N_+}$ bijectively correspond to the $2^n-n+2^n-1=2^{n+1}-(n+1)$ modular elements of $\mathcal E^N$. 

\section{M\"obius inversion}
M\"obius inversion applies to locally finite posets \cite{Rota64}. In particular, for a finite lattice $(L,\wedge,\vee)$ ordered by $\geqslant$, with generic elements
$x,y,z\in L$ and bottom element $x_{\bot}$, any lattice function $f: L\rightarrow\mathbb R$ has M\"obius inversion (from below) $\mu^f:L\rightarrow\mathbb R$ given by
$\mu^f(x)=\sum_{x_{\bot}\leqslant y\leqslant x}\mu_L(y,x)f(y)$, where $\mu_L$ is the M\"obius function of $L$, defined recursively on ordered pairs $(y,x)\in L\times L$ by
$\mu_L(y,x)=1$ if $y=x$ as well as $\mu_L(y,x)=-\sum_{y\leqslant z<x}\mu_L(y,z)$ if $y<x$ (i.e. $y\leqslant x$ and $y\neq x$), while $\mu_L(y,x)=0$ if $y\not\leqslant x$.
Accordingly, $f$ and $\mu^f$ are related by $f(x)=\sum_{y\leqslant x}\mu^f(y)$ for all $x\in L$. The M\"obius function of the subset lattice is
$\mu_{2^N}(B,A)=(-1)^{|A\backslash B|}$, with $B\subseteq A$. Concerning the M\"obius function of $\mathcal P^N$, given any two partitions $P,Q\in\mathcal P^N$, if
$Q<P=\{A_1,\ldots ,A_{|P|}\}$, then for every $A\in P$ there are $B_1,\ldots ,B_{k_A}\in Q$ such that $A=B_1\cup\cdots\cup B_{k_A}$, with $k_A>1$ for at least one $A\in P$.
Segment (or interval) $[Q,P]=\{P':Q\leqslant P'\leqslant P\}$ is isomorphic to product $\times_{A\in P}\mathcal P^{k_A}$, where $\mathcal P^k$ denotes the lattice of partitions
of a $k$-set. Let $c^{QP}_k=|\{A:A\in P,k_A=k\}|,1\leq k\leq n$ be the number of blocks of $P$ obtained as the union of $k$ blocks of $Q$, where
$c^{QP}=(c^{QP}_1,\ldots ,c^{QP}_n)\in\mathbb Z_+^n$ is the class (or type) of segment $[Q,P]$. Then \cite[pp. 359-360]{Rota64},
\begin{equation}
\mu_{\mathcal P^N}(Q,P)=(-1)^{-n+\sum_{1\leq k\leq n}c^{QP}_k}\prod_{1<k<n}(k!)^{c^{QP}_{k+1}}\text .
\end{equation}
This result obtains since the M\"obius function is multiplicative \cite[pp. 147-149]{Aigner79} \cite[p. 560]{Greene82} \cite[Theorem 13.6]{GesselStanley95}
\cite[p. 305]{Stanley2012EnuCom}, which also entails here that for any two pairs $(A,P),(B,Q)\in X^N$ the M\"obius function $\mu_{X^N}$ of $X^N$ is given by
$\mu_{X^N}((B,Q),(A,P))=\mu_{2^N}(B,A)\mu_{\mathcal P^N}(Q,P)$.

Another fundamental result in the present setting concerns the link between the M\"obius function and a closure operator \cite[p. 167]{Aigner79} \cite[pp. 561-562]{Greene82}
\cite[p. 179]{Greene73} \cite[p. 349]{Rota64} \cite[Ex. 84, p. 425]{Stanley2012EnuCom}. For closed pairs $(A,P),(B,Q)\in\mathcal E^N$, and denoting generic pairs by
$(A',P')\in X^N$, the M\"obius function $\mu_{\mathcal E^N}$ of $\mathcal E^N$ is
\begin{equation}
\mu_{\mathcal E^N}((B,Q),(A,P))=\sum_{\underset{cl(A',P')=(A,P)}{(A',P')\in X^N}}\mu_{X^N}((B,Q),(A',P'))\text .
\end{equation}

The M\"obius function $\mu_{\mathcal E^N}$ of $\mathcal E^N$ is now determined through recursion following \cite[pp. 359-360]{Rota64}. Of course, this leads to
re-obtain the isomorphism observed in Section 3. Firstly some further isomorphisms concerning segments $[(B,Q),(A,P)]\subseteq\mathcal E^N$ have to be highlighted.

Extending the notation introduced above, for $0\leq m\leq n$ denote by $\mathcal P^m$ the lattice of partitions of a $m$-set. Also let $\mathcal E^m$ denote the lattice of
embedded subsets of a $m$-set, hence $\mathcal E^0=\{(\emptyset,\emptyset)\}$ consists of only one element. Similarly, recall that the recursions for the Stirling numbers
of the second kind and for the Bell numbers \cite[pp. 89, 92]{Aigner79} rely on the convention that the number of partitions of the empty set is 1, i.e.
$\mathcal P^0=\{\emptyset\}$. Finally, set $\mu_{\mathcal P^m}=\mu_{\mathcal P^m}(\bot,\top)$ and $\mu_{\mathcal E^m}=\mu_{\mathcal E^m}(\bot,\top)$, where $\bot$ and $\top$ are
the bottom and top elements of the corresponding lattice, with $\mu_{\mathcal P^m}=(-1)^{m-1}(m-1)!$ from \cite[Prop. 3 p. 359]{Rota64}, and
$\mu_{\mathcal P^0}=1=\mu_{\mathcal E^0}$ (see \cite[p. 319]{Stanley2012EnuCom}; also, $-1!=-1$ \cite{ConcreteMathematics}). 

Now consider any non-empty segment $[(B,Q),(A,P)]\subseteq\mathcal E^N$, that is to say $(A,P)\sqsupseteq(B,Q)$. Firstly, if $B=A$, then $[(B,Q),(A,P)]\cong[Q^{A^c},P^{A^c}]$,
where $\cong$ denotes isomorphism and $[Q^{A^c},P^{A^c}]\subseteq\mathcal P^{|A^c|}$, hence $\mu_{\mathcal E^N}((B,Q),(A,P))$ is given by expression (2), i.e.
\begin{eqnarray*}
\mu_{\mathcal E^N}((B,Q),(A,P))&=&\mu_{\mathcal P^{|A^c|}}(Q^{A^c},P^{A^c})\\
&=&(-1)^{-|A^c|+\sum_{1\leq k\leq |A^c|}c^{Q^{A^c}P^{A^c}}_k}\prod_{1<k<|A^c|}(k!)^{c^{Q^{A^c}P^{A^c}}_{k+1}}\text ,
\end{eqnarray*}
with $c^{Q^{A^c}P^{A^c}}\in\mathbb Z_+^{|A^c|}$ being the class of segment $[Q^{A^c},P^{A^c}]$.

Secondly, consider the case where $\emptyset\neq B\subset A$. Letting $B=B^A_1$, partition $Q$ has form $Q=\{B_1^A,\ldots ,B_{|Q^A|}^A\}\cup Q^{A^c}$, where
$\{B_1^A,\ldots ,B_{|Q^A|}^A\}=Q^A$. It is readily recognized
that for any atom $(i,P_{\bot})\in\mathcal A_1$ of $\mathcal E^N$ (see Section 2), segment $[(i,P_{\bot}),(N,P^{\top})]\subset\mathcal E^N$ is isomorphic to
$\mathcal P^N$ (see also \cite[Prop. 2 (viii), p. 482]{Grabisch2010}). Analogously, denoting by $\{A\}$ the top element of $\mathcal P^{|A|}$, segment
$[(B,Q^A),(A,\{A\})]\subset\mathcal E^{|Q^A|}$ is isomorphic to $\mathcal P^{|A|}$. Therefore,
\begin{eqnarray*}
\mu_{\mathcal E^N}((B,Q),(A,P))&=&\mu_{\mathcal P^{|Q^A|}}\mu_{\mathcal P^{|A^c|}}(Q^{A^c},P^{A^c})\\
&=&(-1)^{|Q^A|-1}(|Q^A|-1)!\mu_{\mathcal P^{|A^c|}}(Q^{A^c},P^{A^c})\text .
\end{eqnarray*}

The third and most relevant case is when $\emptyset=B\subset A$. Like in the previous case, partition $Q$ has form $Q^A\cup Q^{A^c}$, and thus
$[(\emptyset,Q^A),(A,\{A\})]\cong\mathcal E^{|Q^A|}$. Accordingly, $[(\emptyset,Q),(A,P)]\cong\mathcal E^{|Q^A|}\times[Q^{A^c},P^{A^c}]$, entailing
\begin{equation*}
\mu_{\mathcal E^N}((B,Q),(A,P))=\mu_{\mathcal E^{|Q^A|}}\mu_{\mathcal P^{|A^c|}}(Q^{A^c},P^{A^c})\text .
\end{equation*}
Hence the M\"obius function $\mu_{\mathcal E^N}:\mathcal E^N\times\mathcal E^N\rightarrow\mathbb Z$ is fully determined once
$\mu_{\mathcal E^N}((\emptyset,P_{\bot}),(N,P^{\top}))=\mu_{\mathcal E^n}$ is known. To this end, focus on the dual atoms of $\mathcal E^N$, i.e. closed pairs
covered by the top element $(N,P^{\top})$. There are $2^n-2$ such dual atoms with form $(A,\{A,A^c\})$ for $\emptyset\neq A\neq N$, plus a further one with form
$(\emptyset,P^{\top})$, where $2^n-1$ is precisely the number of dual atoms of $\mathcal P^{n+1}$. Also recall that from Weisner's theorem \cite[Th. 13.9]{GesselStanley95}
\cite[Cor. (b), p. 351]{Rota64} \cite[Cor. 3.9.3, p. 314]{Stanley2012EnuCom}, for any $(A,P),(B,Q)\in\mathcal E^N$ with $(N,P^{\top})\sqsupset(A,B)$
\begin{equation}
\sum_{\underset{(A',P')\sqcap(A,P)=(B,Q)}{(A',P')\in\mathcal E^N}}\mu_{\mathcal E^N}((A',P'),(N,P^{\top}))=0\text .
\end{equation}
\begin{proposition}
The M\"obius function $\mu_{\mathcal E^n}$ of the lattice of embedded subsets of a $n$-set is $\mu_{\mathcal E^n}=\mu_{\mathcal P^{n+1}}=(-1)^nn!$.
\end{proposition}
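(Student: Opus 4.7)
The plan is to apply Weisner's theorem as recorded in equation (4), choosing the dual atom $(A,P)=(\emptyset,P^{\top})$ and setting $(B,Q)=(\emptyset,P_{\bot})$. This expresses $\mu_{\mathcal{E}^n}$ in terms of M\"obius values $\mu_{\mathcal{E}^N}((i,P_{\bot}),(N,P^{\top}))$ already covered by the recursion derived immediately before the proposition.

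First I would identify which embedded subsets $(A',P')$ appear in the Weisner sum, i.e.\ those satisfying $(A',P')\sqcap(\emptyset,P^{\top})=(\emptyset,P_{\bot})$. Since $A'\cap\emptyset=\emptyset$ and $P'\wedge P^{\top}=P'$, the condition reduces to $P'=P_{\bot}$; membership in $\mathcal{E}^N$ then forces $|A'|\leq 1$, because $cl(A',P_{\bot})=(A',P^{A'}_{\bot})$ coincides with $(A',P_{\bot})$ only when $A'=\emptyset$ or $A'=\{i\}$. Thus the sum comprises exactly $n+1$ terms: the bottom $(\emptyset,P_{\bot})$ and the $n$ atoms of $\mathcal{A}_1$; in particular every atom of $\mathcal{A}_2$ drops out.

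Next I would evaluate $\mu_{\mathcal{E}^N}((i,P_{\bot}),(N,P^{\top}))$ via the second case of the recursion ($\emptyset\neq B\subset A$). With $B=\{i\}$, $A=N$, $Q=P_{\bot}$ and $P=P^{\top}$, one has $Q^A=P_{\bot}$ on $N$ so $|Q^A|=n$, while $A^c=\emptyset$ contributes the empty-lattice factor $\mu_{\mathcal{P}^0}=1$; hence $\mu_{\mathcal{E}^N}((i,P_{\bot}),(N,P^{\top}))=(-1)^{n-1}(n-1)!$. Substituting into (4) yields $\mu_{\mathcal{E}^n}+n(-1)^{n-1}(n-1)!=0$, so $\mu_{\mathcal{E}^n}=(-1)^n n!$, which equals $\mu_{\mathcal{P}^{n+1}}$ via Rota's formula $\mu_{\mathcal{P}^m}=(-1)^{m-1}(m-1)!$.

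The main obstacle is the first step: one has to exploit the very particular structure of the dual atom $(\emptyset,P^{\top})$, whose subset component is empty and whose partition component is maximal, in order to collapse the Weisner sum to just $n+1$ terms. A different choice of dual atom, such as $(A,\{A,A^c\})$, would produce a considerably longer sum and most likely force a nested induction on $n$; as a sanity check one can alternatively invoke the lattice isomorphism $f:\mathcal{E}^N\to\mathcal{P}^{N_+}$ from Section 3, which sends $(\emptyset,P_{\bot})$ and $(N,P^{\top})$ to the bottom and top of $\mathcal{P}^{N_+}$ respectively, and hence transfers the M\"obius value directly.
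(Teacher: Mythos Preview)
Your argument is correct, but it is not the route the paper's formal proof takes. The paper applies Weisner's theorem with the dual atom $(N\setminus i,\{N\setminus i,i\})$ rather than $(\emptyset,P^{\top})$. With that choice the elements meeting down to $(\emptyset,P_{\bot})$ are the bottom, the single atom $(i,P_{\bot})\in\mathcal A_1$, and the $n-1$ atoms $(\emptyset,[ij])\in\mathcal A_2$; using $[(i,P_{\bot}),(N,P^{\top})]\cong\mathcal P^n$ and $[(\emptyset,[ij]),(N,P^{\top})]\cong\mathcal E^{n-1}$ this gives the recursion $\mu_{\mathcal E^n}=-(\mu_{\mathcal P^n}+(n-1)\mu_{\mathcal E^{n-1}})$, which is then solved by induction on $n$. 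Your choice of $(\emptyset,P^{\top})$ is cleaner: it avoids the inductive step and yields $\mu_{\mathcal E^n}=-n\mu_{\mathcal P^n}$ in one stroke. In fact the paper explicitly points this out in the remark immediately following its proof, so you have rediscovered the shortcut the author flags.

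One minor correction to your closing commentary: the alternative dual atom $(N\setminus i,\{N\setminus i,i\})$ does \emph{not} produce a considerably longer sum---it also yields exactly $n+1$ terms and only a simple (not nested) induction. The trade-off is merely that the paper's version mirrors Rota's classical computation for $\mu_{\mathcal P^n}$ step by step, whereas your version exploits the special dual atom unique to $\mathcal E^N$.
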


\begin{proof}
In expression (4), let $(B,Q)=(\emptyset,P_{\bot})$ and choose $(A,P)$ to be a dual atom of the form $(N\backslash i,\{N\backslash i,i\})$ for $i\in N$. Then,
$(A',P')\sqcap(A,P)=(\emptyset,P_{\bot})$ entails that $(A',P')\in\mathcal A$ is an atom and, in particular, either $(A',P')\in\mathcal A_2$ has form $(\emptyset,[ij])$ with
$j\in N\backslash i$, or else $(A',P')\in\mathcal A_1$ has form $(i,P_{\bot})$. Accordingly,
\begin{equation*}
\mu_{\mathcal E^n}=-\sum_{\underset{(A',P')\sqcap(A,P)=(\emptyset,P_{\bot})}{(A',P')\in\mathcal A}}\mu_{\mathcal E^N}((A',P'),(N,P^{\top}))\text ,
\end{equation*}
where the sum ranges over the $n$ atoms just described. From the above isomporphisms, $[(i,P_{\bot}),(N,P^{\top})]\cong\mathcal P^n$ as well as
$[(\emptyset,[ij]),(N,P^{\top})]\cong\mathcal E^{n-1}$. Thus $\mu_{\mathcal E^n}=-(\mu_{\mathcal P^n}+(n-1)\mu_{\mathcal E^{n-1}})$. Since
$\mu_{\mathcal E^0}=\mu_{\mathcal P^0}=1$ and $\mu_{\mathcal E^1}=\mu_{\mathcal P^2}=-1$, the desired conclusion
$\mu_{\mathcal E^n}=-(\mu_{\mathcal P^n}+(n-1)\mu_{\mathcal P^n})=-n\mu_{\mathcal P^n}=\mu_{\mathcal P^{n+1}}$ follows.
\end{proof}
\smallskip

It may be noted that this route aims at following tightly \cite[pp. 359-360]{Rota64}. However, if $(A,P)$ is (straightforwardly) chosen to be the dual atom
$(\emptyset,P^{\top})$ (rather than $(N\backslash i,\{N\backslash i,i\})$), then the above argument immediately yields $\mu_{\mathcal E^n}=-n\mu_{\mathcal P^n}$,
as the $n$ involved atoms are all the elements of $\mathcal A_1$, i.e. $(j,P_{\bot}),j\in N$. 

Proposition 4 confirms that the lattice $\mathcal E^n$ of embedded subsets of a $n$-set is isomorphic to the lattice $\mathcal P^{n+1}$ of partitions of a $n+1$-set.
In particular, the characteristic polynomial \cite[p. 155]{Aigner79}, \cite[pp. 319-320]{Stanley2012EnuCom}, \cite{Zaslavsky87} is
\begin{equation*}
\chi_{\mathcal E^n}(x)=\chi_{\mathcal P^{n+1}}(x)=(x-1)(x-2)\cdots(x-n)\text ,
\end{equation*}
with the Whitney numbers of the first kind given by the Stirling numbers of the first kind \cite{Aigner87}, \cite[p. 88]{Aigner79}, i.e.
\begin{equation*}
\sum_{\underset{r(A,P)=k}{(A,P)\in\mathcal E^N}}\mu_{\mathcal E^N}((\emptyset,P_{\bot}),(A,P))=s_{n+1,n+1-k}\text{ for }0\leq k\leq n\text ,
\end{equation*}
and the Whitney numbers of $\mathcal E^n$ of the second kind given by the Stirling numbers of the second kind, i.e.
\begin{equation*}
|\{(A,P):(A,P)\in\mathcal E^N,r(A,P)=k\}|=S_{n+1,n+1-k}\text{ for }0\leq k\leq n\text .
\end{equation*}
Furthermore, $\mathcal E^n$ is a supersolvable lattice, since in view of Proposition 3 it admits $\frac{(n+1)!}{2}$ maximal chains from the bottom element to the top one
consisting of modular elements \cite[Ex. 2.6, p. 104]{StanleySupersolvable} \cite[Ex. 3.14.4, p. 339]{Stanley2012EnuCom}.

It seems also worth pointing out (again, see Section 1) that while in the non-atomic lattice $\mathfrak C(N)_{\bot}$ of embedded subsets only the $2^n-1$ modular elements
$(A,P^A_{\bot}),A\neq\emptyset$ admit a decomposition as a join of the $n$ available atoms $(i,P_{\bot})$ (see \cite[Prop. 8, p. 486]{Grabisch2010}), in the geometric
lattice $\mathcal E^N$ all elements admit a decomposition as a join of the $\binom{n+1}{2}$ atoms. Accordingly \cite[Theorem 13.7]{GesselStanley95},
\begin{equation*}
\mu_{\mathcal C(N)_{\bot}}((\emptyset,P_{\bot}),(A,P))=\left\{\begin{array}{c} (-1)^{|A|}\text{ if }P=P^A_{\bot}\\0\text{ otherwise} \end{array}\right.
\end{equation*}
while
\begin{equation*}
\mu_{\mathcal E^N}((\emptyset,P_{\bot}),(A,P))=(-1)^{|A|}|A|!\prod_{B\in P^{A^c}}(-1)^{|B|-1}(|B|-1)!
\end{equation*}
and $\mu_{\mathcal E^N}((\emptyset,P_{\bot}),(A,P))$ equals the difference between the number of subsets
$\{(A',P')_1,\ldots ,(A',P')_{|S|}\}=S\subseteq\mathcal A$ of atoms of $\mathcal E^N$ such that $|S|$ is even as well as $cl((A',P')_1\sqcup\cdots\sqcup(A',P')_{|S|})=(A,P)$ and
the number of subsets $\{(A',P')_1,\ldots ,(A',P')_{|S'|}\}=S'\subseteq\mathcal A$ of atoms such that $|S'|$ is odd as well as
$cl((A',P')_1\sqcup\cdots\sqcup(A',P')_{|S'|})=(A,P)$ \cite[p. 562]{Greene82}.

Finally, it seems worth noticing that
\begin{equation*}
(-1)^{r(N,P^{\top})}\mu_{\mathcal E^N}((\emptyset,P_{\bot}),(N,P^{\top}))=(-1)^n(-1)^nn!=(-1)^{2n}n!>0
\end{equation*}
for all $n\geq 0$, which is consistent with \cite[Cor. 13.10]{GesselStanley95}.

\section{M\"obius algebra, vector subspaces and games}
The purpose of this section is to highlight that all cooperative games are in fact elements of subspaces of the free vector space $V(\mathcal E^N)$ over $\mathbb R$ generated by
$\mathcal E^N$ \cite[p. 143]{Aigner87}, \cite[p. 181]{Aigner79}. That is, the elements of $V(\mathcal E^N)\subset\mathbb R^{\mathcal B_{n+1}}$ are real-valued lattice functions
$g:\mathcal E^N\rightarrow\mathbb R$.

Recall that, for $N$ regarded as a player set, cooperative game theory deals with three types of lattice functions:
\begin{enumerate}
\item[$(a)$] coalitional games \cite{Roth88} or set functions $v:2^N\rightarrow\mathbb R_+,v(\emptyset)=0$,
\item[$(b)$] global games \cite{GilboaLehrer90GG} or partition functions $h:\mathcal P^N\rightarrow\mathbb R_+,h(P_{\bot})=0$,
\item[$(c)$] games in partition function form \cite{Grabisch2010,Myerson88,ThrallLucas63}, hereafter referred to as PFF games for short, or functions
$f:\mathfrak C(N)_{\bot}\rightarrow\mathbb R_+,f(\emptyset,P_{\bot})=0$.
\end{enumerate}
A further fourth type of cooperative games may be defined, which incorporates all previous ones $(a)$, $(b)$ and $(c)$, namely
\begin{enumerate}
\item[$(d)$] extended PFF games, or those elements $g\in V(\mathcal E^N)$ of the free vector space under concern such that
$g:\mathcal E^N\rightarrow\mathbb R_+,g(\emptyset,P_{\bot})=0$. 
\end{enumerate}

Together with the M\"obius function, the other element of the incidence algebra (of locally finite posets \cite[p. 138]{Aigner79} \cite[p. 344]{Rota64}) playing a fundamental role
in cooperative game theory is the zeta function $\zeta$, as it provides the so-called \textquoteleft unanimity games\textquoteright\text{ }\cite{Roth88}. In the incidence
algebra of $\mathcal E^N$ over $\mathbb R$, the zeta function $\zeta_{\mathcal E^N}:\mathcal E^N\times\mathcal E^N\rightarrow\{0,1\}$ is defined on ordered pairs of embedded
subsets by
\begin{equation*}
\zeta_{\mathcal E^N}((B,Q),(A,P))=\left\{\begin{array}{c} 1\text{ if }(A,P)\sqsupseteq(B,Q)\text ,\\0\text{ otherwise.} \end{array}\right.
\end{equation*}
In these terms, unanimity (coalitional) games $u_B,\emptyset\neq B\in 2^N$ are defined by $u_B(A)=\zeta_{2^N}(B,A)$ for all subsets (or coalitions) $A\in 2^N$, where $\zeta_{2^N}$
is the zeta function in the incidence algebra of $2^N$. They are linearly independendent elements of the vector space of coalitional games and thus form a basis. This fact is
essential since the very beginning of cooperative game theory.

In the present setting, for $(B,Q)\in\mathcal E^N$, let $\zeta^{B,Q}(A,P)=\zeta_{\mathcal E^N}((B,Q),(A,P))$ for all $(A,P)\in\mathcal E^N$. Then,
$\{\zeta^{B,Q}:(B,Q)\in\mathcal E^N\}$ is a basis of $V(\mathcal E^N)$. Also denote by $V(\mathcal E^N)_{\bot}\subset\mathbb R^{\mathcal B_{n+1}-1}_+$ the subspace of extended
PFF games. Furthermore, let the M\"obius inversion of $g$ be $\mu^g:\mathcal E^N\rightarrow\mathbb R$, i.e.
\begin{eqnarray*}
\mu^g(A,P)&=&\sum_{\underset{(B,Q)\in\mathcal E^N}{(B,Q)\sqsubseteq(A,P)}}\mu_{\mathcal E^N}((B,Q),(A,P))g(B,Q)=\\
&=&g(A,P)-\sum_{\underset{(B,Q)\in\mathcal E^N}{(B,Q)\sqsubset(A,P)}}\mu^g(B,Q)\text ,
\end{eqnarray*}
with $\mu^g(\emptyset,P_{\bot})=g(\emptyset,P_{\bot})$ ($=0$ if $g\in V(\mathcal E^N)_{\bot}$), and where
\begin{equation*}
\mu^{\zeta^{B,Q}}(A,P)=\left\{\begin{array}{c} 1\text{ if }(A,P)=(B,Q)\\0\text{ otherwise} \end{array}\right.
\end{equation*}
for every element $\zeta^{B,Q}$ of the chosen basis, thereby providing the canonical basis (or Kronecker delta) elements (see \cite[Proposition 4.50, p. 181]{Aigner79}, while
crucially noticing that the alternative definition $\zeta^{B,Q}(A,P)=\zeta_{\mathcal E^N}((A,P),(B,Q))$ applies there.) Specifically, let $\epsilon_{B,Q}\in V(\mathcal E^N)$ be the
$\mathcal B_{n+1}$-vector with entries indexed by elements $(A,P)\in\mathcal E^N$ and such that its unique non-zero entry is the $(B,Q)$ one, which equals 1. Then the M\"obius
algebra of $\mathcal E^N$ over $\mathbb R$ (denoted by M\"ob$(\mathcal E^N)$, see \cite[p. 143]{Aigner87} \cite[p. 184]{Aigner79}) is defined by the system of orthogonal idempotents
$\{\epsilon_{B,Q}:(B,Q)\in\mathcal E^N\}$ and linear extension to all of $V(\mathcal E^N)$ through multiplication
\begin{equation*}
\epsilon_{B,Q}\cdot\epsilon_{A,P}=\left\{\begin{array}{c} 1\text{ if }(A,P)=(B,Q)\text ,\\0\text{ otherwise.} \end{array}\right.
\end{equation*}
In terms of the above basis $\{\zeta^{B,Q}:(B,Q)\in\mathcal E^N\}$ of $V(\mathcal E^N)$, this multiplication yields $\zeta^{B,Q}\cdot\zeta^{A,P}=\zeta^{cl((A,P)\sqcup(B,Q))}$
(see \cite[Prop. 4.55, p. 184]{Aigner79}, crucially noticing again that since the alternative definition of the basis detailed above applies there, then 
$\zeta^{B,Q}\cdot\zeta^{A,P}=\zeta^{(A,P)\sqcap(B,Q)}$). It is well known that M\"obius inversion $\mu^g$ quantifies precisely the coefficients identifying functions
$g\in V(\mathcal E^N)$ as linear combinations of the basis elements, that is
\begin{equation}
g(A,P)=\sum_{(B,Q)\in\mathcal E^N}\mu^g(B,Q)\zeta^{B,Q}(A,P)\text{ for all }(A,P)\in\mathcal E^N\text .
\end{equation}

If $Y\subset\mathcal E^N$ and $\mu^g(A,P)=0$ for all $(A,P)\in\mathcal E^N\backslash Y$, then M\"obius inversion $\mu^g$ may be said to live only $Y$. In these terms, the following
observations are immediate:
\begin{enumerate}
\item[$(i)$] (traditional) PFF games are elements of the subspace $\hat V(\mathcal E^N)\subset\mathbb R^{\mathcal B_{n+1}-\mathcal B_n}$ of $V(\mathcal E^N)$
consisting of those extended PFF games $g\in V(\mathcal E^N)$ whose M\"obius inversion lives only on embedded subsets $(A,P)$ such that $A\neq\emptyset$, i.e.
$A=\emptyset\Rightarrow\mu^g(A,P)=0$, entailing $g(\emptyset,P)=0$ for all $P\in\mathcal P^N$;
\item[$(ii)$] global games are elements of the subspace $\tilde V(\mathcal E^N)\subset\mathbb R^{\mathcal B_n-1}$ consisting of extended PFF games $g\in V(\mathcal E^N)$
whose M\"obius inversion lives only on embedded subsets $(A,P)$ such that $A=\emptyset$ and $P\neq P_{\bot}$, entailing $g(A,P)=g(\emptyset,P)$ for all $A\in P$ (thereby somehow
strengthening the idea of a global level of satisfaction, common to all players, see \cite{GilboaLehrer90GG});
\item[$(iii)$] coalitional games are elements of the subspace $\bar V(\mathcal E^N)\subset\mathbb R^{2^n-1}$ consisting of those $g\in V(\mathcal E^N)$ whose M\"obius
inversion lives only on the $2^n-1$ modular elements $(A,P^A_{\bot})$ of $\mathcal E^N$ defined in Section 3 such that $A\neq\emptyset$, yielding $g(\emptyset,P)=0$ for all
$P\in\mathcal P^N$ as well as $g(A,P)=g(A,P^A_{\bot})$ for all $P\in\mathcal P^N$ such that $A\in P$.
\end{enumerate}

Another subspace of $V(\mathcal E^N)$ appearing in cooperative game theory is given by the so-called ``additively separable'' global games (or partition functions, see
\cite{GilboaLehrer90GG,GilboaLehrer91VI}), namely those $h:\mathcal P^N\rightarrow\mathbb R_+$ such that $h(P)=h_v(P):=\sum_{A\in P}v(A)$ for some coalitional game (or set function)
$v:2^N\rightarrow\mathbb R_+$. The M\"obius inversion of these partition functions lives only on the $2^n-n$ modular elements $P^A_{\bot}$ of the partition lattice, see Section 1.
This is detailed hereafter (see also \cite[Prop. 4.4 and Appx]{GilboaLehrer90GG}, \cite[Prop. 3.3]{GilboaLehrer91VI} and \cite[Prop. 7]{RossiLNCSNearBoolean}). 
\begin{proposition}
If $h=h_v$ and $\sum_{i\in N}v(\{i\})=h_v(P_{\bot})>0$, then a continuuum of set functions $w:2^N\rightarrow\mathbb R_+,w\neq v$ satisfies $h=h_w$.
\end{proposition}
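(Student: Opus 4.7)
The plan is to parametrize the affine subspace of all set functions $w:2^N\to\mathbb R_+$ satisfying $h_w=h_v$, then construct within it a one-parameter family of non-negative functions distinct from $v$.

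First, I would establish the characterization: $h_w=h_v$ if and only if there is $\delta\in\mathbb R^N$ with $\sum_{i\in N}\delta_i=0$ such that $w(\{i\})=v(\{i\})+\delta_i$ for every $i\in N$ and $w(A)=v(A)+\sum_{i\in A}\delta_i$ for every $A\in 2^N$ with $|A|\geq 2$. The necessity comes from evaluating $h_w=h_v$ first at $P=P_{\bot}$ (which forces $\sum_i\delta_i=0$) and then at each modular partition $P^A_{\bot}$ (which, after subtracting the $P_{\bot}$ identity, pins down $w(A)=v(A)+\sum_{i\in A}\delta_i$). Sufficiency is immediate since for any partition $P$,
\begin{equation*}
h_w(P)=\sum_{A\in P}w(A)=\sum_{A\in P}v(A)+\sum_{A\in P}\sum_{i\in A}\delta_i=h_v(P)+\sum_{i\in N}\delta_i=h_v(P).
\end{equation*}
Thus the set of admissible $w$ is an $(n-1)$-dimensional affine subspace through $v$, intersected with the non-negative orthant.

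Next, since $h_v(P_{\bot})>0$, I would pick $i^*\in N$ with $v(\{i^*\})>0$ together with any $i'\in N\setminus\{i^*\}$. Defining $\delta^\epsilon$ by $\delta^\epsilon_{i^*}=-\epsilon$, $\delta^\epsilon_{i'}=+\epsilon$, and $\delta^\epsilon_j=0$ otherwise, the resulting family $w_\epsilon$ automatically satisfies $h_{w_\epsilon}=h_v$. Its values are perturbed from $v$ by $-\epsilon$ on subsets $A$ containing $i^*$ but not $i'$ (including the singleton $\{i^*\}$), by $+\epsilon$ on subsets containing $i'$ but not $i^*$, and are unchanged elsewhere. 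Non-negativity of the $+\epsilon$-perturbed values is free, so the only binding condition is $\epsilon\leq v(A)$ for each $A\ni i^*$ with $A\not\ni i'$; the singleton $\{i^*\}$ contributes a strictly positive bound by hypothesis.

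The main obstacle is ensuring that this minimum is strictly positive, since some non-singleton $A$ with $i^*\in A,i'\notin A$ might have $v(A)=0$. When all such $v(A)$ are positive, the continuum $\{w_\epsilon:\epsilon\in(0,\epsilon_0]\}$ follows directly for $\epsilon_0$ small enough, with distinct entries at $\{i^*\}$ showing that $w_\epsilon\neq v$. In the remaining degenerate cases, I would look instead for a non-zero $\delta$ in the null space $\{\delta\in\mathbb R^N:\sum_i\delta_i=0\}$ subject to the additional constraints $\sum_{i\in A}\delta_i\geq 0$ for every $A$ with $v(A)=0$, and move a small positive distance along $\delta$; securing the existence of such a feasible direction, leveraging $v(\{i^*\})>0$, is the technically delicate point on which the argument hinges.
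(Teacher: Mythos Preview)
Your characterization of the solution set is equivalent to the paper's, but reached by a different and more elementary route. The paper computes the M\"obius inversion $\mu^{h_v}$ on the partition lattice, shows it vanishes off the modular partitions $P^A_\bot$, and reads off that $h_w=h_v$ iff $\sum_i w(\{i\})=\sum_i v(\{i\})$ and $\mu^w(A)=\mu^v(A)$ for $|A|>1$; this is exactly your condition $w(A)=v(A)+\sum_{i\in A}\delta_i$ with $\sum_i\delta_i=0$, rewritten. Your direct evaluation at $P_\bot$ and at each $P^A_\bot$ is cleaner and avoids the incidence-algebra machinery, at the cost of not exhibiting the ``lives only on modular elements'' structure that the paper wants for the sequel.

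Your hesitation at the end is well founded, and in fact the obstacle you flag cannot be removed in general: the proposition as stated is false. Take $N=\{1,2,3\}$, $v(\{1\})=1$, and $v(A)=0$ for every other nonempty $A\subsetneq N$. Then $h_v(P_\bot)=1>0$, yet any $w\colon 2^N\to\mathbb R_+$ with $h_w=h_v$ must satisfy $w(\{1,2\})+w(\{3\})=0$ and $w(\{1,3\})+w(\{2\})=0$, forcing $w(\{2\})=w(\{3\})=0$ and hence $w(\{1\})=1$; all remaining values are then pinned down and $w=v$. So no feasible nonzero $\delta$ exists here, and the ``technically delicate point'' you isolate is a genuine counterexample rather than a removable difficulty.

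The paper's own proof stops at the characterization and does not address non-negativity of $w(A)$ for $|A|>1$ either; the post-proof remark about an $(n-1)$-dimensional simplex refers only to the singleton values and tacitly ignores the constraints $v(A)+\sum_{i\in A}\delta_i\ge 0$. Thus your proposal matches the paper on the part that is actually established, and is more honest about where the argument breaks down. If one drops the non-negativity requirement on $w$ (as the paper itself suggests two sentences later), both your argument and the paper's go through immediately and yield an $(n-1)$-dimensional affine family.
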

\begin{proof}
By direct substitution, the M\"obius inversion $\mu^{h_v}:\mathcal P^N\rightarrow\mathbb R$ of partition function $h_v$ satisfies
\begin{equation*}
\mu^{h_v}(P)=\sum_{A\in P}\sum_{B\subseteq A}v(B)\sum_{Q\leqslant P:B\in Q}\mu_{\mathcal P^N}(Q,P)\text{ for all }P\in\mathcal P^N\text .
\end{equation*}
If $P\neq P^A_{\bot}$, then the recursion for M\"obius function $\mu_{\mathcal P^N}:\mathcal P^N\times\mathcal P^N\rightarrow\mathbb R$ yields
\begin{equation*}
\sum_{Q\leqslant P:A\in Q}\mu_{\mathcal P^N}(Q,P)=\sum_{P^A_{\bot}\leqslant Q\leqslant P}\mu_{\mathcal P^N}(Q,P)=0\text , 
\end{equation*}
and the same for proper subsets $B\subset A$. M\"obius inversion $\mu^{h_v}$ thus lives only on modular partitions, where it obtains recursively by
$\mu^{h_v}(P_{\bot})=\sum_{i\in N}v(\{i\})$ and $\mu^{h_v}(P_{\bot}^A)=\mu^v(A)$ for $1<|A|\leq n$, with $P^N_{\bot}=P^{\top}$. Hence any $w\neq v$ satisfying
$\sum_{i\in N}v(\{i\})=\sum_{i\in N}w(\{i\})$ and $\mu^v(A)=\mu^w(A)$ for all $A\in 2^N$ such that $|A|>1$ also additively separates $h$, i.e. $h_v=h_w$.
\end{proof}
\smallskip

If $\sum_{i\in N}v(\{i\})>0$, then all $w:2^N\rightarrow\mathbb R_+$ such that $h_v=h_w$ may be chosen from within a $n-1$-dimensional simplex $\Delta\subset\mathbb R^n_+$.
However, if $\sum_{i\in N}v(\{i\})=0$, then there is no $w:2^N\rightarrow\mathbb R_+$ such that $h_v=h_w,v\neq w$. Nevertheless, this technical distinction plays no role as soon as
attention is placed on generic partition and set functions $h:\mathcal P^N\rightarrow\mathbb R,w:2^N\rightarrow\mathbb R$, i.e. not required to take only non-negative real values.
In any case, in view of observation $(ii)$ above, it is evident that additively separable partition functions may be regarded as elements of the subspace of $V(\mathcal E^N)$
consisting of those extended PFF games $g$ with M\"obius inversion $\mu^g$ such that $\mu^g(B,Q)\neq 0$ only if $B=\emptyset$ and $Q$ is a modular partition. Developing from
additively separable partition functions, now consider those $g\in V(\mathcal E^N)$ with M\"obius inversion $\mu^g$ living only on the $2^{n+1}-(n+1)$ modular elements of
$\mathcal E^N$ (see Section 3).
\begin{proposition}
If $g\in V(\mathcal E^N)$ has M\"obius inversion $\mu^g$ living only on modular elements of $\mathcal E^N$, then there are set functions $v,w:2^N\rightarrow\mathbb R,v(\emptyset)=0$
such that
\begin{equation*}
g(A,P)=v(A)+\sum_{B\in P}w(B)\text{ for all }(A,P)\in\mathcal E^N\text .
\end{equation*}
\end{proposition}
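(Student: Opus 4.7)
The plan is to start from the Möbius expansion~(5), specialise it to the modular support, and reorganise the resulting sum into one piece depending only on $A$ and one that breaks up block-by-block over $P$.

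First, by~(5) and the support hypothesis, $g(A,P)=\sum_{(C,R)\sqsubseteq(A,P)}\mu^g(C,R)$ with the sum ranging only over modular $(C,R)$. By Proposition~3 these are $(\emptyset,P_{\bot})$, the pairs $(\emptyset,P^D_{\bot})$ with $|D|>1$, and the pairs $(C,P^C_{\bot})$ with $\emptyset\neq C\in 2^N$. I will then characterise when such a modular element lies below $(A,P)$. Using that $A\in P$ whenever $A\neq\emptyset$, the inequality $P^C_{\bot}\leqslant P$ is automatic for $C\subseteq A$, so $(C,P^C_{\bot})\sqsubseteq(A,P)$ amounts to $\emptyset\neq C\subseteq A$; likewise, $(\emptyset,P^D_{\bot})\sqsubseteq(A,P)$ amounts to $D$ sitting inside some (necessarily unique) block $B\in P$.

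Assembling these observations gives the decomposition $g(A,P)=\mu^g(\emptyset,P_{\bot})+\sum_{\emptyset\neq C\subseteq A}\mu^g(C,P^C_{\bot})+\sum_{B\in P}\sum_{D\subseteq B,\,|D|>1}\mu^g(\emptyset,P^D_{\bot})$, where the middle sum is empty if $A=\emptyset$ and the third has been reorganised by the unique block of $P$ containing each $D$. I then define $v(A):=\sum_{\emptyset\neq C\subseteq A}\mu^g(C,P^C_{\bot})$, which satisfies $v(\emptyset)=0$, and $w(B):=\sum_{D\subseteq B,\,|D|>1}\mu^g(\emptyset,P^D_{\bot})+\tfrac{|B|}{n}\mu^g(\emptyset,P_{\bot})$. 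Because $\sum_{B\in P}|B|=n$ for every $P\in\mathcal P^N$, the block sum $\sum_{B\in P}w(B)$ reproduces exactly the bottom term plus the third piece of the decomposition, and the identity $g(A,P)=v(A)+\sum_{B\in P}w(B)$ follows.

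The only delicate step is handling the bottom weight $\mu^g(\emptyset,P_{\bot})$, which is attached neither to any subset $A$ nor to any individual block of $P$ but must nevertheless be accommodated without violating the normalisation $v(\emptyset)=0$. The cosmetic device of spreading it among the blocks with weights $|B|/n$ does the job precisely because the total $\sum_{B\in P}|B|=n$ is constant over all partitions; in the natural game-theoretic case $g(\emptyset,P_{\bot})=0$ this issue disappears entirely and $w$ reduces to its first summand.
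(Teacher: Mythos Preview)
Your proof is correct and follows essentially the same route as the paper's: expand $g(A,P)$ via~(5), restrict to the modular support given by Proposition~3, characterise which modular elements lie below $(A,P)$, and split the sum into a piece depending only on $A$ (giving $v$) and a block-wise piece over $P$ (giving $w$). The one presentational difference is in how the bottom value $\mu^g(\emptyset,P_{\bot})$ is absorbed into $w$: the paper appeals to Proposition~5 to identify the block-wise part as an additively separable partition function, thereby obtaining a whole family of admissible $w$'s (parametrised by any choice of singleton values summing to $\mu^g(\emptyset,P_{\bot})$), whereas you simply exhibit one explicit member of that family via the weights $|B|/n$. Your device is more elementary and self-contained; the paper's version makes the non-uniqueness of $w$ explicit, which it then uses in the ensuing discussion.
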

\begin{proof}
Under the above conditions, and given expression (5) above,
\begin{equation*}
g(A,P)=\mu^g(\emptyset,P_{\bot})+\sum_{B\in P}\sum_{\emptyset\neq B'\subseteq B}\mu^g(\emptyset,P^{B'}_{\bot})+\sum_{\emptyset\neq A'\subseteq A}\mu^g(A',P^{A'}_{\bot})
\end{equation*}
for all $(A,P)\in\mathcal E^N$. In view of Proposition 5,
\begin{equation*}
\mu^g(\emptyset,P_{\bot})+\sum_{B\in P}\sum_{\emptyset\neq B'\subseteq B}\mu^g(\emptyset,P^{B'}_{\bot})=\sum_{B\in P}w(B)
\end{equation*}
for any set function $w$ with M\"obius inversion $\mu^w$ such that
\begin{equation*}
\mu^g(\emptyset,P_{\bot})=\sum_{i\in N}w(\{i\})\text{ and }\mu^g(\emptyset,P^{B'}_{\bot})=\mu^w(B')\text{ for }B'\in 2^N,|B'|>1\text .
\end{equation*}
On the other hand, $\sum_{\emptyset\neq A'\subseteq A}\mu^g(A',P^{A'}_{\bot})=v(A)=\sum_{A'\subseteq A}\mu^v(A')$ for any set function $v$ with M\"obius inversion $\mu^v$
satisfying $\mu^g(A',P^{A'}_{\bot})=\mu^v(A')$ for $A'\in 2^N$, $|A'|>0$ and $\mu^v(\emptyset)=v(\emptyset)=0$.
\end{proof}
\smallskip

Like for additively separable partition functions, if the M\"obius inversion of extended PFF games lives only on the modular elements of the lattice, then all the values taken on
embedded subsets can be recovered from only two set functions, out of which one is unique while the other can be chosen from within a continuum (in view of Proposition 5).

\section{Concluding remarks}
This paper essentially proposes to look at the so-called lattice of embedded subsets from a wider perspective, such that the novel resulting structure is a combinatorial geometry
obtained through a closure operator that satisfies the Steinitz exchange axiom. With respect to the lattice previously proposed in \cite{Grabisch2010}, the geometric one presented
here additionally includes all elements given by a (non-bottom) partition paired with the empty set, and this evidently yields a most natural lattice embedding, i.e. of $\mathcal P^N$
into $\mathcal E^N$. However, from this perspective what seems mostly important is that the geometric lattice of embedded subsets of a $n$ set is isomorphic to the lattice of partitions
of a $n+1$-set. In general, this enables to re-obtain a variety of results applying to partitions, ranging from the characteristic polynomial to supersolvability.

Since embedded subsets, or embedded coalitions, were firstly used as modeling tools in cooperative game theory, a meaningful direction for investigation focuses on the free vector
space of functions taking real values on lattice elements. In this view, a novel type of games has been defined, namely extended PFF games, which englobe all existing cooperative games
as elements of suitable vector subspaces. But perhaps most importantly, the geometric lattice of embedded subsets yields that all existing cooperative games are real-valued functions
defined on atomic lattices, and this may have fundamental implications for the solution concept (i.e. how to share the fruits of cooperation) associated with these games. This shall
be addressed in future work. 

\bibliographystyle{abbrv}
\bibliography{biblioContinuousSetPacking}

\end{document}